\documentclass[12pt]{article}
\usepackage[margin=1in]{geometry}
\usepackage{url}

\usepackage[T1]{fontenc}

\usepackage{amsmath}
\usepackage{amssymb}
\usepackage{amsthm}
\usepackage{mathrsfs}

\usepackage{pgf,tikz}
\usetikzlibrary{arrows, automata, shapes, positioning}
\usetikzlibrary{decorations}
\usetikzlibrary{snakes}
\usepackage{verbatim}
\usepackage{multirow}
\usepackage{xcolor}
\usepackage{thmtools}
\usepackage{rotating}

\theoremstyle{plain}
\newtheorem{theorem}{Theorem}
\newtheorem{corollary}[theorem]{Corollary}
\newtheorem{lemma}[theorem]{Lemma}

\theoremstyle{definition}

\newtheorem{example}[theorem]{Example}

\theoremstyle{remark}

\title{Cobham's Theorem and Automaticity}
\author{
Lucas Mol and Narad Rampersad\thanks{The author is supported by an NSERC Discovery Grant.}\\
Department of Mathematics and Statistics\\
University of Winnipeg\\
\url{{l.mol, n.rampersad}@uwinnipeg.ca}\\
\and
Jeffrey Shallit\thanks{The author is supported by an NSERC Discovery Grant.}\\
School of Computer Science\\
University of Waterloo\\
\url{shallit@uwaterloo.ca}\\
\and
Manon Stipulanti\thanks{The author is supported by FRIA Grant 1.E030.16.}\\
Department of Mathematics\\
University of Li\`ege\\
\url{m.stipulanti@uliege.be}}
\date{\today}

\begin{document}
\maketitle

\renewcommand\thmcontinues[1]{Continued}

\begin{abstract}
We make certain bounds in Krebs' proof of Cobham's theorem explicit
and obtain corresponding upper bounds on the length of a common prefix
of an aperiodic $a$-automatic sequence and an aperiodic $b$-automatic
sequence, where $a$ and $b$ are multiplicatively independent.
We also show that an automatic sequence cannot have arbitrarily
large factors in common with a Sturmian sequence.
\end{abstract}

\section{Introduction}
This paper is concerned with the following question:  Given a
$b$-automatic sequence $f$ and a sequence $g$ from some other family
of sequences $\mathscr{G}$, how similar can $f$ and $g$ be?
By ``similar'' we could mean several things:
\begin{enumerate}
\item $f$ and $g$ are identical;
\item $f$ and $g$ have a long common prefix;
\item $f$ and $g$ have a factor of length $n$ in common for infinitely many $n$;
\item $f$ and $g$ have the same set of factors of length $n$ for all sufficiently large $n$;
\item $f$ and $g$ agree on a set of positions of density $1$.
\end{enumerate}
When $\mathscr{G}$ is the family of $a$-automatic sequences, where
$a$ and $b$ are \emph{multiplicatively independent} ($a$ and $b$
are not powers of the same integer), then we have some answers.
Notably, Cobham's theorem \cite{C} states that $f$ and $g$ can be
identical only if $f$ and $g$ are ultimately periodic.  Recently,
Krebs \cite{Kre18} has given a very short and elegant proof of
Cobham's theorem.  Much of what we do in the first part of this paper
is based on this proof of Cobham's theorem.  We also note that Byszewski and
Konieczny \cite{BK17} generalized Cobham's theorem by showing that if
$f$ and $g$ coincide on a set of positions of density $1$, then they are periodic
on a set of positions of density $1$.

One of the main results of this paper concerns the ``long common
prefix'' measure of similarity.  In particular we give explicit bounds
(in terms of the number of states of the automata generating the
sequences) on how long $f$ and $g$ can agree before they are forced to
agree forever.  As an example of a result of this type, consider the
following generalization of the Fine--Wilf theorem
\cite[Theorem~2.3.5]{Sha09}: If $f \in w\{w,x\}^\omega$ and $g \in
x\{w,x\}^\omega$ ($w$ and $x$ are finite words) agree on a prefix of
length $|w|+|x|-\gcd(|w|,|x|)$, then $f=g$.  (Here the notation
$\{w,x\}^\omega$ denotes the set of infinite words of the form
$U_1U_2U_2\cdots$, where each $U_i \in \{w,x\}$.)  In our setting,
where $f$ is an $a$-automatic sequence and $g$ is a $b$-automatic
sequence, we obtain our bounds on the length of the common prefix by
following the proof of Krebs and making explicit several of the bounds
that appear in this proof.  Our result answers a question posed by
Zamboni (personal communication), who asked how long a sequence
generated by a $b$-uniform morphism and one generated by an
$a$-uniform morphism can agree before the two sequences are forced to
be equal.

This problem of bounding the length of the common prefix of $f$ and
$g$ is related to the concept of \emph{$b$-automaticity} of infinite
sequences \cite{Sha96}, which measures the minimum number of states of
a base-$b$ automaton that computes the length-$n$ prefix of the sequence.  In
particular, we are able to get a lower bound on the
$b$-automaticity of an $a$-automatic sequence.

Regarding the property of having ``arbitrarily large factors in
common'', it is not difficult to see that even distinct aperiodic $a$-automatic 
and $b$-automatic sequences can have arbitrarily large
factors in common.  For example, the characteristic sequences of
powers of $2$ and $3$ respectively are $2$-automatic and $3$-automatic
respectively, and clearly have arbitrarily large runs of $0$'s in
common.  The problem in this case is to show that in general such large factors
necessarily have some simple structure; however, we do not address
this question in this paper.

If we now change the family $\mathscr{G}$ of sequences from
$a$-automatic to Sturmian, then it is somewhat easier to answer these
kinds of questions.  \emph{Sturmian sequences} are those given by the first
differences of sequences of the form
$$(\lfloor n \alpha  + \beta\rfloor)_{n \geq 1},$$ where
$0 \leq \alpha, \beta < 1$ and $\alpha$ is irrational \cite{BS}.  The
number $\alpha$ is called the \emph{slope} of the Sturmian sequence
and the number $\beta$ is the called the \emph{intercept}.
It is well-known that a Sturmian sequence cannot be $b$-automatic.
This follows from the fact that the limiting frequency of $1$'s in a
Sturmian sequence is $\alpha$, whereas if a letter in a $b$-automatic
sequence has a limiting frequency, that frequency must be rational
\cite[Thm.~6, p.~180]{C}.

The problem of determining the maximum length of a common prefix of a
$b$-automatic sequence and a Sturmian sequence was examined by
Shallit \cite{Sha96}.  Upper bounds on the length of the common prefix
can be deduced from the automaticity results given by Shallit.
In the present paper we answer, in the negative, the question, ``Can a
Sturmian sequence and a $b$-automatic sequence have arbitrarily large
finite factors in common?''

Byszewski and Konieczny \cite{BK17} examine these questions for the
family of \emph{generalized polynomial functions} (these are sequences defined
by expressions involving algebraic operations along with the floor
function).  This family contains the family of Sturmian sequences as a
subset.  In recent work \cite{BK18}, they have extended some of the
results of this paper to this more general class.

We also mention the work of Tapsoba \cite{Tap95}.  Recall that the
complexity of a word $s$ is the function counting the number of
distinct factors of length $n$ in $s$.  It is also well-known that
Sturmian words have the minimum possible complexity $n+1$ achievable
by an aperiodic infinite word.  Tapsoba shows another distinction
between automatic sequences and Sturmian words by giving a formula for
the minimal complexity function of the fixed point of an injective
$k$-uniform binary morphism and comparing this to the complexity
function of Sturmian words.

\section{Common prefix of $a$-automatic and $b$-automatic\\ sequences}\label{sec:cobham}
This section is largely based on the work of Krebs \cite{Kre18} and so
we will mostly stick to the notation used in his paper.  The reader
should read this section in conjunction with Krebs' paper; we
occasionally omit details that can be found there.

\subsection{Definitions and notation}
Let $b \geq 2$ and let $w \in \{0,1,2,\ldots\}^*$.  Write
$w = w_{n-1} w_{n-2} \cdots w_0$, where each
$w_i \in \{0,1,2,\ldots\}$.  We define the number $[w]_b$ by
\[
[w]_b = w_{n-1} b^{n-1} + w_{n-2} b^{n-2} + \cdots + w_1 b + w_0.
\]
Typically, one restricts $w$ to be over the \emph{canonical digit set}
$\{0,1,\ldots,b-1\}$, in which case every natural number $x$ has a unique
representation $w$ such that $x = [w]_b$ and $w$ does not begin with a $0$ (the number $0$ is represented by the empty string).
In this case, we use $\langle x \rangle_b$ to denote this
representation $w$.

However, Krebs' proof requires the use of a larger digit set.
Let $D_b$ denote the digit set $\{0,\ldots,2b\}$.  Over this digit set,
numbers may no longer have unique representations, even with the
restriction that the representation must begin with a non-zero digit.
We use the notation $(x)_{D_b}$ to refer to some particular representation
of $x$ over the digit set $D_b$ that does not begin with the
digit zero, without necessarily specifying which
representation it is.  Note also that if some representation $(x)_{D_b}$
has length $n$, then
\[
x \leq 2b\sum_{i=0}^{n-1}b^i = \frac{2b(b^n-1)}{b-1} \leq 2b^{n+1}.
\]

A \emph{deterministic finite automaton with output} (DFAO) is a
$6$-tuple $(S, D, \delta, s_0, \Delta, F)$, where $S$ is a finite set
of \emph{states}, $D$ is a finite \emph{input alphabet},
$\delta : S \times D \to S$ is the \emph{transition function},
$s_0 \in S$ is the \emph{initial state}, $\Delta$ is a finite
\emph{output alphabet}, and $F : S \to \Delta$ is the \emph{output
  function}.  See \cite{AS03} for more details.

Let $D$ be a set of non-negative digits containing
$\{0,1,\ldots,b-1\}$.  A sequence $(f_x)_{x \in \mathbb{N}}$ is
\emph{$(b,D)$-automatic} if there is a DFAO
$M = (S, D, \delta, s_0, \Delta, F)$ such that
$f_{[w]_b} = F(\delta(s_0, w))$ for all $w \in D^*$.  Note that for
each $x$, the DFAO $M$ must produce the same output for all
$w \in D^*$ satisfying $x = [w]_b$.  The DFAO $M$ is called a
\emph{$(b,D)$-DFAO}.  A sequence is \emph{$b$-automatic} if it is
$(b,\{0,1,\ldots,b-1\})$-automatic, and the automaton $M$ in this case
is called a \emph{$b$-DFAO}.

\subsection{Normalization}\label{sec:normalization}
Krebs begins his proof by showing that a sequence $f$ is automatic with
respect to representations over the canonical base-$b$ digit set if
and only if it is automatic with respect to representations over the
digit set $D_b$.  The reverse direction can be seen by noting that
given a $(b,D_b)$-DFAO generating $f$, one obtains a $b$-DFAO
generating $f$ simply by deleting the transitions on all digits other
than $\{0,1,\ldots,b-1\}$.  The forward direction is proved using two results:
the first is a modification of \cite[Theorem~6.8.6]{AS03} and the second
can be found in \cite[Proposition~7.1.4]{Lot02}.
The first result \cite[Theorem~6.8.6]{AS03} states that if a sequence
$f$ is generated by a $b$-DFAO $M$, then so is the
sequence obtained by first applying a transducer $T$ to the input and
then feeding the output of $T$ to $M$.  As presented in \cite{AS03},
this result requires $T$ to map words over the digits set
$\{0, 1, \ldots, b-1\}$ to words over the same digit set; however, the
proof is easily modified to allow $T$ to map words over any digit set
to words over $\{0,1,\ldots,b-1\}$.  Krebs therefore applies this modified
version of \cite[Theorem~6.8.6]{AS03} where $T$ is the
transducer of \cite[Proposition~7.1.4]{Lot02}, which
converts input over a non-canonical digit set (in our case $D_b$) to
the canonical digit set for a given base $b$ (this is called
\emph{normalization}).  The result of this operation is therefore a
$(b, D_b)$-DFAO computing $f$.  We now discuss the details of this
construction with the aim of obtaining a reasonably small
$(b,D_b)$-DFAO computing $f$.

Let $N$ be the transducer of \cite[Lemma~7.1.1]{Lot02}, which converts
from the digit set $D_b$ to the digit set $\{0,1,\ldots,b-1\}$ and reads its input from least significant
digit to most significant digit.  The
number of states of $N$ is determined by the quantity
\[m = \max \{|e-d| : e \in D_b, d \in \{0,1,\ldots,b-1\}\};\] in
particular, the state set of $N$ is defined to be
$Q = \{s \in \mathbb{N} : s < m/(b-1)\}$.  In our case, we have
$m=2b$, and furthermore, for $b=2$ we have $2b/(b-1) = 4$ and for
$b>2$ we have $2 < 2b/(b-1) \leq 3$.  We therefore set $\gamma = 4$ if
$b=2$ and $\gamma = 3$ if $b > 2$, so that
$Q = \{s \in \mathbb{N} : s < \gamma\}$.

The set of transitions of $N$ is
\[
E = \{ s \stackrel{e|d}{\longrightarrow} s' : s+e = bs'+d\}.
\]
The initial state is $0$ and the output function $\omega$ maps each
state $s \in Q$ to $\langle s \rangle_b$.  Note that $N$ is
\emph{subsequential}, or ``input-deterministic''.
To see this, suppose we have two transitions
\[
s \stackrel{e|d'}{\longrightarrow} s' \text{ and }
s \stackrel{e|d''}{\longrightarrow} s''.
\]
Then $bs'+d' = bs''+d''$, which we can rewrite as $(s'-s'')b =
d''-d'$.  However, we have $|d''-d'|<b$, so $|s'-s''|<1$, which
implies $s'=s''$ and $d'=d''$.

On input $u = e_ne_{n-1} \cdots e_0$ over $D_b$, the transducer $N$
produces output $v = \omega(s)d_nd_{n-1}\cdots d_0$ over
$\{0,1,\ldots,b-1\}$, where $s$ is the state reached by $N$ after
reading $u$, and $[u^R]_b = [v^R]_b$.

\begin{example}[label=exa:cont]\label{ex:base2-TM-1}
Throughout this section, we illustrate the proof with the case $b=2$. 
In this case, the transducer $N$ is the one given in Figure~\ref{fig:trans}.
For instance, on input $u = 4032$ over $D_2$, the transitions of $N$ are 
\[
0 \stackrel{2|0}{\longrightarrow} 1 \stackrel{3|0}{\longrightarrow} 2 \stackrel{0|0}{\longrightarrow} 1 \stackrel{4|1}{\longrightarrow} 2,
\]
so $N$ outputs $v = \langle 2 \rangle_2 1000 = 10 1 000$, which is the canonical base-$2$ expansion of $u$.
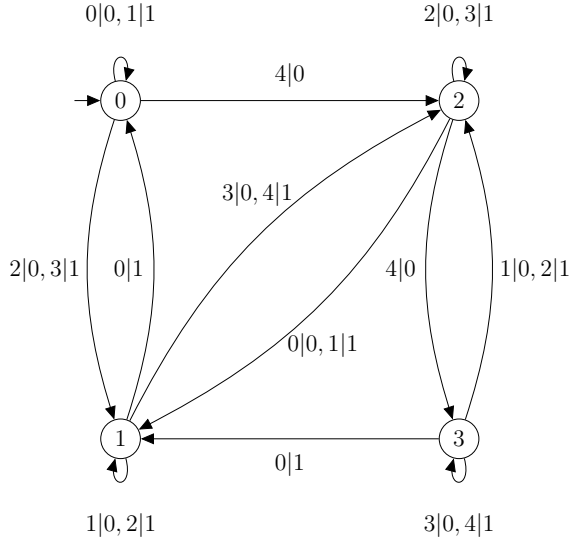
\begin{figure}
\begin{center}
\scalebox{0.75}{
\begin{tikzpicture}
\tikzstyle{every node}=[shape=circle,fill=none,draw=black,minimum size=20pt,inner sep=2pt]
\node[](a) at (0,0) {$0$};
\node[](b) at (0,-6) {$1$};
\node[](c) at (6,0) {$2$};
\node[](d) at (6,-6) {$3$};

\tikzstyle{every node}=[shape=circle,fill=none,draw=none,minimum size=10pt,inner sep=2pt]
\node(a0) at (-1,0) {};

\tikzstyle{every path}=[color =black, line width = 0.5 pt, > = triangle 45]
\tikzstyle{every node}=[shape=circle,minimum size=5pt,inner sep=2pt]
\draw [->] (a0) to [] node [] {}  (a);

\draw [->] (a) to [loop above] node [] {$0|0,1|1$}  (a);
\draw [->] (a) to [bend right=18] node [left] {$2|0, 3|1$}  (b);
\draw [->] (a) to node [above] {$4|0$}  (c);

\draw [->] (b) to [loop below] node [below] {$1|0,2|1$}  (b);
\draw [->] (b) to [bend right=18] node [left] {$0|1$}  (a);
\draw [->] (b) to [bend left=18] node [above] {$3|0, 4|1$}  (c);

\draw [->] (c) to [loop above] node [above] {$2|0,3|1$}  (c);
\draw [->] (c) to [bend right=18] node [left] {$4|0$}  (d);
\draw [->] (c) to [bend left=18] node [below] {$0|0, 1|1$}  (b);

\draw [->] (d) to [loop below] node [] {$3|0,4|1$}  (d);
\draw [->] (d) to [] node [below] {$0|1$}  (b);
\draw [->] (d) to [bend right=18] node [right] {$1|0, 2|1$}  (c);

;
\end{tikzpicture}}
\end{center}
\caption{The transducer $N$ in base $2$ converting $D_2$ into $\{0,1\}$.}\label{fig:trans}
\end{figure}
\end{example}

Let $M = (S, \{0,1,\ldots,b-1\}, \delta, I, \Delta, F)$ be a
$b$-DFAO generating a $b$-automatic sequence $f$.  Recall that our
convention is that a $b$-DFAO reads its input from most significant
digit to least significant digit.

\begin{example}[continues=exa:cont]\label{ex:base2-TM-5}
We now consider the Thue--Morse sequence $t = 01101001\cdots$ which is the fixed point of the morphism $\tau : 0 \mapsto 01, 1 \mapsto 10$. 
It is well known that the Thue--Morse sequence $t$ is $2$-automatic and can be generated by the $2$-DFAO $M=(S, \{0,1\}, \delta, I, \Delta, F)$ with $S=\{0,1\}=F$ and $I=0$ drawn in Figure~\ref{fig:Thue-Morse}.

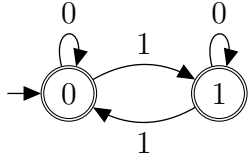
\begin{figure}
\begin{center}
\scalebox{1.0}{
\begin{tikzpicture}
\tikzstyle{every node}=[shape=circle,fill=none,draw=black,minimum size=20pt,inner sep=2pt]
\node[accepting](a) at (0,0) {$0$};
\node[accepting](b) at (2,0) {$1$};

\tikzstyle{every node}=[shape=circle,fill=none,draw=none,minimum size=10pt,inner sep=2pt]
\node(a0) at (-1,0) {};

\tikzstyle{every path}=[color =black, line width = 0.5 pt, > = triangle 45]
\tikzstyle{every node}=[shape=circle,minimum size=5pt,inner sep=2pt]
\draw [->] (a0) to [] node [] {}  (a);

\draw [->] (a) to [loop above] node [] {$0$}  (a);
\draw [->] (a) to [bend left] node [above] {$1$}  (b);

\draw [->] (b) to [loop above] node [] {$0$}  (b);
\draw [->] (b) to [bend left] node [below] {$1$}  (a);
;
\end{tikzpicture}}
\caption{The $2$-DFAO $M$ generating the Thue--Morse sequence.}
\label{fig:Thue-Morse}
\end{center}
\end{figure}

\end{example}

Let
$M' = (S', D_b, \delta', I', \Delta, F'),$ be the $(b,D_b)$-DFAO
defined as follows (again, it reads its input from most significant
digit to least significant digit).
We define
\begin{align*}
S' &= \left\lbrace \left\lbrace (s_0,0),(s_1,1),\ldots,(s_{\gamma-1},\gamma-1)\right\rbrace : s_0,s_1,\ldots,s_{\gamma-1}\in S \right\rbrace, \mbox{ and}\\
I' &= \left\{\,\left(\,\delta(I, \langle q\rangle_b), q\right): 0\leq q<\gamma \,\right\}.
\end{align*}
Clearly we have $I'\in S'$.  For any $t\in S'$ and $e\in D_b$, we define
\begin{align*}
\delta'(t, e) &= \bigcup_{(s,q)
  \in t}\left\{\,\left(\,\delta(s, d), q'\right) : {\textstyle q'
  \stackrel{e|d}{\longrightarrow} q} \mbox{ in } N \,\right\}.
\end{align*}
Finally, for $t \in S'$, define $F'(t) = F(s)$, where $(s,0)\in t$ (by the definition of $S'$, there is a unique such $s\in S$).

We first show that $\delta'$ is well-defined. 
Let $t\in S'$ and $e\in D_b$, and we will show that $\delta'(t,e)\in S'$.  We need to show that for every state $p$ of $N$ (i.e., every $p\in Q$) the set $\delta'(t,e)$ contains a unique element of the form $(s,p)$, where $s\in S$.  Let $p\in Q$ be a state of $N$.  Since $N$ is input-deterministic, there is exactly one outgoing transition from $p$ in $N$ with input symbol $e$, say $p\stackrel{e|d}{\longrightarrow}q$ in $N$.  Since $(s,q)\in t$ for exactly one $s\in S$ (by definition of $S'$), we conclude that $(\delta(s,d),p)\in \delta'(t,e)$, and it is the unique element in $\delta'(t,e)$ with second coordinate $p$.

Now we show that $M'$ computes the same automatic sequence as $M$.  For any $u=u_m\cdots u_0 \in D_b^*$ that doesn't begin with $0$, there exists exactly one
$v=v_n\cdots v_0 \in \{0,1,\ldots,b-1\}^*$ that doesn't begin with $0$ such that
$[u]_b = [v]_b$.  Namely, $v=\langle [u]_b\rangle_b$.  Note that $m\leq n\leq m+2$.  We need to show that if $(s,0) \in \delta'(I', u)$, then $\delta(I,v)=s$.  Suppose that $(s,0)\in \delta'(I',u)$.  Then in $N$, we have
\[
0\stackrel{u_0|v_0}{\longrightarrow} q_0 \stackrel{u_1|v_1}{\longrightarrow} q_1 \stackrel{u_2|v_2}{\longrightarrow}\cdots \stackrel{u_m|v_m}{\longrightarrow} q_m,
\]
and $\langle q_m\rangle_b=v_n\cdots v_{m+1}$.  Therefore, we have $(\delta(I,v_n\cdots v_{m+1}),q_m)\in I'$, and retracing the steps of $M'$, we conclude that
\[
\delta(I,v)=s.
\]

Informally, $M'$ works through the transducer $N$ in the reverse direction, while computing the transitions of $M$ on the output.  Since we are working through the transducer backwards, there are $\gamma$ possible places to start, each corresponding to a different backwards path through the transducer.  Further, if we start working backwards from state $q$ in the transducer, then the output function of the transducer will be $\langle q\rangle_b$.  The output function of the transducer is read first by  $M'$, which explains the definition of $I'$.  Only when we reach the end of the input string do we know which backwards path through the transducer was correct (the one that started at state $0$), so $M'$ computes the transitions of $M$ for all $\gamma$ paths along the way.

We have therefore shown how, given a $b$-DFAO $M$ for $f$, to produce
a $(b, D_b)$-DFAO $M'$ that also generates $f$.  Furthermore, the
$(b,D_b)$-DFAO $M'$ has at most
\begin{equation}\label{conversion_blowup}
|S|^{\gamma} \leq |S|^{4}
\end{equation}
states.

\begin{example}[continues=exa:cont]\label{ex:base2-TM-6}
In Figure~\ref{fig:Thue-Morse-prime}, we give the $(2, D_2)$-DFAO $M'$ (omitting all
unreachable states) that computes the Thue--Morse sequence.  We also give its transition table in Table~\ref{table:Thue-Morse-prime}. 
To that aim, recall that $\gamma=4$.
From Figure~\ref{fig:Thue-Morse}, we also get 
\begin{align*}
  I' &= \left\{\,\left(\,\delta(I, \langle q\rangle_b), q\right): 0\leq q<\gamma \,\right\} \\
  &= \{ (\delta(I,\epsilon),0), (\delta(I,1),1), (\delta(I,10),2), (\delta(I,11),3) \} \\
&=\{ (0,0), (1,1), (1,2), (0,3) \}.
\end{align*}

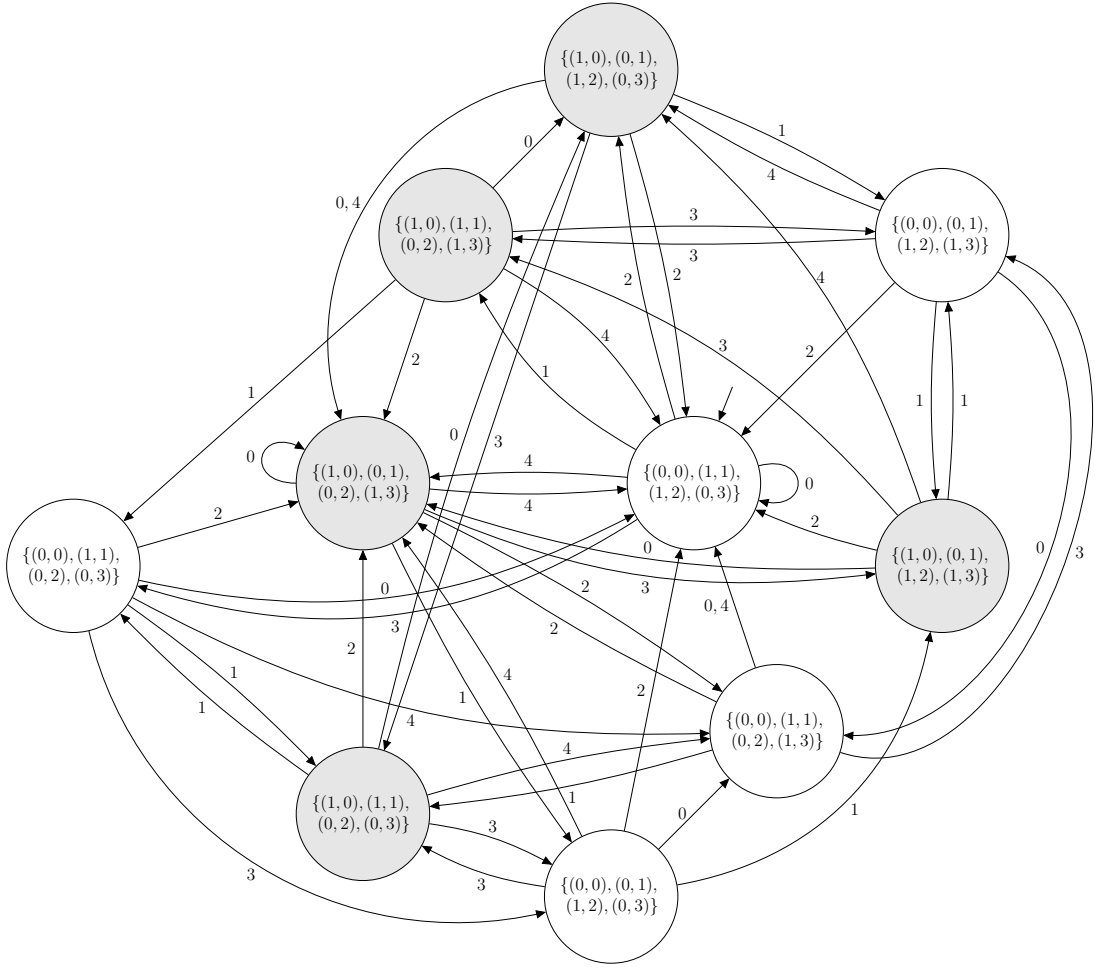
\begin{figure}
\begin{center}
\scalebox{0.55}{
\begin{tikzpicture}
\tikzstyle{every node}=[shape=circle,fill=none,draw=black,minimum size=20pt,inner sep=2pt]

\node[fill=black!10](pink) at (8,-4) {\begin{tabular}{c}
$\left\{ (1,0), (1,1), \right.$  \\ 
$\left. (0,2), (1,3)\right\}$ \\
\end{tabular} };

\node[fill=black!10](lightblue) at (12,0) {\begin{tabular}{c}
$\left\{ (1,0), (0,1), \right.$  \\ 
$\left. (1,2), (0,3)\right\}$ \\
\end{tabular} };

\node[](darkgreen1) at (-1,-12) {\begin{tabular}{c}
$\left\{ (0,0), (1,1), \right.$  \\ 
$\left. (0,2), (0,3)\right\}$ \\
\end{tabular} };

\node[fill=black!10](purple) at (6,-10) {\begin{tabular}{c}
$\left\{ (1,0), (0,1), \right.$  \\ 
$\left. (0,2), (1,3)\right\}$ \\
\end{tabular} };

\node[](red) at (14,-10) {\begin{tabular}{c}
$\left\{  (0,0), (1,1), \right.$  \\ 
$\left. (1,2), (0,3)\right\}$ \\
\end{tabular} };

\node[](brown) at (20,-4) {\begin{tabular}{c}
$\left\{ (0,0), (0,1),  \right.$  \\ 
$\left. (1,2), (1,3)\right\}$ \\
\end{tabular} };

\node[fill=black!10](darkgreen2) at (6,-18) {\begin{tabular}{c}
$\left\{  (1,0), (1,1),  \right.$  \\ 
$\left. (0,2), (0,3)\right\}$ \\
\end{tabular} };

\node[](darkblue) at (16,-16) {\begin{tabular}{c}
$\left\{ (0,0), (1,1),  \right.$  \\ 
$\left. (0,2), (1,3)\right\}$ \\
\end{tabular} };

\node[fill=black!10](black) at (20,-12) {\begin{tabular}{c}
$\left\{ (1,0), (0,1),  \right.$  \\ 
$\left. (1,2), (1,3)\right\}$ \\
\end{tabular} };

\node[](yellow) at (12,-20) {\begin{tabular}{c}
$\left\{  (0,0), (0,1), \right.$  \\ 
$\left. (1,2), (0,3)\right\}$ \\
\end{tabular} };

\tikzstyle{every node}=[shape=circle,fill=none,draw=none,minimum size=10pt,inner sep=2pt]
\node(a0) at (15,-7.5) {};

\tikzstyle{every path}=[color =black, line width = 0.5 pt, > =
triangle 45]
\tikzstyle{every node}=[shape=circle,minimum size=5pt,inner sep=2pt]
\draw [->] (a0) to [] node [] {}  (red);

\draw [->] (pink) to [] node [above] {$0$}  (lightblue);
\draw [->] (pink) to [] node [above left] {$1$}  (darkgreen1);
\draw [->] (pink) to [] node [right] {$2$}  (purple);
\draw [->] (pink) to [bend left=3] node [above] {$3$}  (brown);
\draw [->] (pink) to [bend left=15] node [right] {$4$}  (red);

\draw [->] (lightblue) to [bend right=50] node [left] {$0,4$}  (purple);
\draw [->] (lightblue) to [bend left=5] node [above] {$1$}  (brown);
\draw [->] (lightblue) to [bend left=5] node [right] {$2$}  (red);
\draw [->] (lightblue) to [] node [right] {$3$}  (darkgreen2);

\draw [->] (darkgreen1) to [bend right=20] node [above left] {$0$}  (red);
\draw [->] (darkgreen1) to [bend left=5] node [above right] {$1$}  (darkgreen2);
\draw [->] (darkgreen1) to [] node [above] {$2$}  (purple);
\draw [->] (darkgreen1) to [bend right=45] node [below left] {$3$}  (yellow);
\draw [->] (darkgreen1) to [bend right=15] node [below] {$4$}  (darkblue);

\draw [->] (purple) to [loop left, out=540, in=510, looseness=4] node [left] {$0$}  (purple);
\draw [->] (purple) to [bend right=5] node [left] {$1$}  (yellow);
\draw [->] (purple) to [bend left=5] node [above right] {$2$}  (darkblue);
\draw [->] (purple) to [bend right=15] node [below] {$3$}  (black);
\draw [->] (purple) to [bend right=5] node [below] {$4$}  (red);

\draw [->] (red) to [loop right, looseness=4] node [right] {$0$}  (red);
\draw [->] (red) to [bend left=15] node [above] {$1$}  (pink);
\draw [->] (red) to [bend left=5] node [left] {$2$}  (lightblue);
\draw [->] (red) to [bend left=25] node [below] {$3$}  (darkgreen1);
\draw [->] (red) to [bend right=5] node [above] {$4$}  (purple);

\draw [->] (brown) to [bend left=75] node [left] {$0$} (darkblue);
\draw [->] (brown) to [bend right=5] node [left] {$1$}  (black);
\draw [->] (brown) to [] node [above left] {$2$}  (red);
\draw [->] (brown) to [bend left=3] node [below] {$3$}  (pink);
\draw [->] (brown) to [bend left=5] node [below] {$4$}  (lightblue);

\draw [->] (darkgreen2) to [bend left=5] node [left] {$0$}  (lightblue);
\draw [->] (darkgreen2) to [bend left=5] node [below left] {$1$}  (darkgreen1);
\draw [->] (darkgreen2) to [] node [left] {$2$}  (purple);
\draw [->] (darkgreen2) to [bend left=10] node [above] {$3$}  (yellow);
\draw [->] (darkgreen2) to [bend left=5] node [above] {$4$}  (darkblue);

\draw [->] (darkblue) to [] node [left] {$0,4$}  (red);
\draw [->] (darkblue) to [bend left=5] node [below] {$1$}  (darkgreen2);
\draw [->] (darkblue) to [bend left=5] node [below left] {$2$}  (purple);
\draw [->] (darkblue) to [bend right=90] node [right] {$3$}  (brown);

\draw [->] (black) to [bend left=10] node [above] {$0$}  (purple);
\draw [->] (black) to [bend right=5] node [right] {$1$}  (brown);
\draw [->] (black) to [bend left=5] node [above] {$2$}  (red);
\draw [->] (black) to [bend right=15] node [above] {$3$}  (pink);
\draw [->] (black) to [bend right=15] node [above] {$4$}  (lightblue);

\draw [->] (yellow) to [] node [left] {$0$}  (darkblue);
\draw [->] (yellow) to [bend right=35] node [below right] {$1$}  (black);
\draw [->] (yellow) to [] node [left] {$2$}  (red);
\draw [->] (yellow) to [bend left=10] node [below] {$3$}  (darkgreen2);
\draw [->] (yellow) to [bend right=5] node [above right] {$4$}  (purple);
;
\end{tikzpicture}
}
\caption{The $(2, D_2)$-DFAO $M'$ computing the Thue--Morse sequence
  (``white'' states output $0$; ``grey'' states output $1$).}
\label{fig:Thue-Morse-prime}
\end{center}
\end{figure}
\begin{sidewaystable}
\centering
\begin{tabular}{|c|c|c|c|c|c|c|c|c|c|c|}
\hline
$\delta'(t, e)$ & \multicolumn{3}{c|}{$e\in\{0,1,2\}$}\\
\hline
$t \in S'$ & $0$ & $1$ & $2$\\
\hline 
$\{(0,0), (1,1), (1, 2), (0,3)\}$ & $\{(0,0), (1,1), (1, 2), (0,3)\}$ & $\{(1,0), (1,1), (0, 2), (1,3)\}$ & $\{(1,0), (0,1), (1, 2), (0,3)\}$  \\
$\{(1,0), (1,1), (0, 2), (1,3)\}$  & $\{(1,0),(0,1),(1,2),(0,3)\}$ & $\{(0,0),(1,1),(0,2),(0,3)\}$& $\{(1,0),(0,1),(0,2),(1,3)\}$ \\
$\{(1,0), (0,1), (1, 2), (0,3)\}$ & $\{(1,0),(0,1),(0,2),(1,3)\}$ & $\{(0,0),(0,1),(1,2),(1,3)\}$ & $\{(0,0),(1,1),(1,2),(0,3)\}$ \\
$\{(0,0), (1,1), (0, 2), (0,3)\}$ & $\{(0,0),(1,1),(1,2),(0,3)\}$ & $\{(1,0),(1,1),(0,2),(0,3)\}$& $\{(1,0),(0,1),(0,2),(1,3)\}$ \\
$\{(1,0), (0,1), (0, 2), (1,3)\}$ & $\{(1,0),(0,1),(0,2),(1,3)\}$& $\{(0,0),(0,1),(1,2),(0,3)\}$& $\{(0,0),(1,1),(0,2),(1,3)\}$ \\
$\{(0,0), (0,1), (1, 2), (1,3)\}$ & $\{(0,0), (1,1), (0,2), (1,3)\}$ &  $\{(1,0), (0,1), (1,2), (1,3)\}$ & $\{(0,0), (1,1), (1,2), (0,3)\}$ \\
$\{(1,0), (0,1), (1, 2), (1,3)\}$ & $\{(1,0), (0,1), (0,2), (1,3)\}$ &  $\{(0,0), (0,1), (1,2), (1,3)\}$ & $\{(0,0), (1,1), (1,2), (0,3)\}$ \\
$\{(1,0), (1,1), (0, 2), (0,3)\}$ & $\{(1,0), (0,1), (1,2), (0,3)\}$ &  $\{(0,0), (1,1), (0,2), (0,3)\}$ & $\{(1,0), (0,1), (0,2), (1,3)\}$ \\
$\{(0,0), (0,1), (1, 2), (0,3)\}$ & $\{(0,0), (1,1), (0,2), (1,3)\}$ &  $\{(1,0), (0,1), (1,2), (1,3)\}$ & $\{(0,0), (1,1), (1,2), (0,3)\}$ \\
$\{(0,0), (1,1), (0, 2), (1,3)\}$ & $\{(0,0), (1,1), (1,2), (0,3)\}$ &  $\{(1,0), (1,1), (0,2), (0,3)\}$ &  $\{(1,0), (0,1), (0,2), (1,3)\}$ \\
\hline

\end{tabular}

\bigskip

\begin{tabular}{|c|c|c|}
\hline
$\delta'(t, e)$ & \multicolumn{2}{c|}{$e\in\{3,4\}$}\\
\hline
$t \in S'$ & $3$ & $4$\\
\hline 
$\{(0,0), (1,1), (1, 2), (0,3)\}$ & $\{(0,0), (1,1), (0, 2), (0,3)\}$ & $\{(1,0), (0,1), (0, 2), (1,3)\}$ \\
$\{(1,0), (1,1), (0, 2), (1,3)\}$ & $\{(0,0),(0,1),(1,2),(1,3)\}$ & $\{(0,0),(1,1),(1,2),(0,3)\}$ \\
$\{(1,0), (0,1), (1, 2), (0,3)\}$ & $\{(1,0),(1,1),(0,2),(0,3)\}$ & $\{(1,0),(0,1),(0,2),(1,3)\}$  \\
$\{(0,0), (1,1), (0, 2), (0,3)\}$ & $\{(0,0),(0,1),(1,2),(0,3)\}$ & $\{(0,0),(1,1),(0,2),(1,3)\}$\\
$\{(1,0), (0,1), (0, 2), (1,3)\}$ & $\{(1,0),(0,1),(1,2),(1,3)\}$& $\{(0,0),(1,1),(1,2),(0,3)\}$  \\
$\{(0,0), (0,1), (1, 2), (1,3)\}$ & $\{(1,0), (1,1), (0,2), (1,3)\}$ &  $\{(1,0), (0,1), (1,2), (0,3)\}$  \\
$\{(1,0), (0,1), (1, 2), (1,3)\}$ & $\{(1,0), (1,1), (0,2), (1,3)\}$ &  $\{(1,0), (0,1), (1,2), (0,3)\}$  \\
$\{(1,0), (1,1), (0, 2), (0,3)\}$ & $\{(0,0), (0,1), (1,2), (0,3)\}$ &  $\{(0,0), (1,1), (0,2), (1,3)\}$  \\
$\{(0,0), (0,1), (1, 2), (0,3)\}$ & $\{(1,0), (1,1), (0,2), (0,3)\}$ &  $\{(1,0), (0,1), (0,2), (1,3)\}$  \\
$\{(0,0), (1,1), (0, 2), (1,3)\}$ & $\{(0,0), (0,1), (1,2), (1,3)\}$ &  $\{(0,0), (1,1), (1,2), (0,3)\}$ \\
\hline

\end{tabular}

\caption{The transition function $\delta'$ of $M'$ as a function of $t \in S'$ and $e \in \{0,1,2,3,4\}$.}\label{table:Thue-Morse-prime}
\end{sidewaystable}
We also compute $M'$ on two different words $u \in D_2^*$. 
Take $u = 4032 \in D_2^*$ whose canonical base-$2$ expansion is $v = 101000$. 
The transitions are 
\begin{align*}
I'=\{(0,0),(1,1),\boldsymbol{(1,2)},(0,3)\}
&\overset{4}{\longrightarrow} \{ (1,0), \boldsymbol{(0,1)}, (0,2), (1,3) \} \\
&\overset{0}{\longrightarrow} \{ (1,0), (0,1), \boldsymbol{(0,2)}, (1,3)\} \\
&\overset{3}{\longrightarrow}  \{ (1,0), \boldsymbol{(0,1)}, (1,2), (1,3) \} \\
&\overset{2}{\longrightarrow} \{ \boldsymbol{(0,0)}, (1,1), (1,2), (0,3) \}.
\end{align*}
By definition of $F'$, we have $
F'(\{ (0,0), (1,1), (1,2), (0,3) \}) = F(0) = 0$.
Thus the automaton $M'$ outputs $0$ after reading $u$, just as the automaton $M$ does when reading $v$.  The second coordinates of the ordered pairs in bold are the states of the ``correct path'' through the transducer $N$, in reverse:
\[
0 \stackrel{2|0}{\longrightarrow} 1 \stackrel{3|0}{\longrightarrow} 2 \stackrel{0|0}{\longrightarrow} 1 \stackrel{4|1}{\longrightarrow} 2.
\]
The first coordinate of the bolded pair in $I'$ is $\delta(I,\langle 2\rangle_2)=\delta(I,10)=1$, and the first coordinates of the remaining bolded pairs are determined by starting from state $\delta(I,10)=1$ in $M$ and following the transitions of $M$ given by the output labels of the above path through $N$ (again, working backwards through $N$):
\[
\delta(I,10)=1\stackrel{1}{\longrightarrow}0\stackrel{0}{\longrightarrow}0\stackrel{0}{\longrightarrow}0\stackrel{0}{\longrightarrow}0=\delta(I,v).
\]
This illustrates how, on input $u$, $M'$ computes $F(\delta(I,v))$, which is exactly the output of $M$ on input $v$.

As a second illustration, take $u' = 2014 \in D_2^*$ whose canonical base-$2$ expansion is $v' = 10110$. 
On the input $u'$, the transitions of $M'$ are 
\begin{align*}
I'=\{(0,0),\boldsymbol{(1,1)},(1,2),(0,3)\}
&\overset{2}{\longrightarrow} \{\boldsymbol{(1,0)}, (0,1), (1,2), (0,3)\} \\
&\overset{0}{\longrightarrow} \{(1,0), \boldsymbol{(0,1)}, (0,2), (1,3)\}  \\
&\overset{1}{\longrightarrow} \{(0,0), (0,1), \boldsymbol{(1,2)}, (0,3)\}  \\
&\overset{4}{\longrightarrow} \{ \boldsymbol{(1,0)}, (0,1), (0,2), (1,3)\}.
\end{align*}
Similarly, $F'(\{ (1,0), (0,1), (0,2), (1,3) \}) = F(1) = 1$, so the automaton $M'$ outputs $1$ after reading $u'$, agreeing with the output of $M$ on input $v'$.  Again, we have bolded the ordered pairs corresponding to the ``correct path'' through the transducer $N$.
\end{example}

We end this section with some remarks on the construction.  We hope
that the reader is convinced that the construction we have described works
for any digit set containing $\{0,1,\ldots,b-1\}$ and not just the
digit set $D_b$.  Furthermore, Krebs has pointed out (private
communication) that the number of states needed for the construction
can be improved by changing the digit set from $D_b$ to
$\{0,1,\ldots,2b-2\}$.  Recall that our construction results in a DFAO
with $|S|^\gamma$ states.  If $b=2$, then we have $\gamma=4$, while if $b>2$, then we
have $\gamma=3$.  However, if we change the digit set as suggested by
Krebs, we improve this to $|S|^2$ states.  Krebs' proof of Cobham's
Theorem works just as well with this new choice of digit set; however,
a number of bounds and constants in his proof would have to be
modified.  We do not present these modifications here; we just note
that it is possible to do it.

\subsection{Upper bound on longest commmon prefix}\label{sec:long_prefix}
Having dealt with the conversion to the larger digit set required by
Krebs, we now proceed with the Diophantine approximation result used
by Krebs.

\begin{lemma}\label{dirich}
  Let $a,b \ge 2$ be integers and let $\epsilon$ be a positive
  real number.  Define $$\eta := \max\{\lceil \log_a b \rceil, \lceil
  \log_b a \rceil \}.$$ There are non-negative integers $m,n <
  \eta((b-1)/\epsilon+1)$ such that $|a^m - b^n| \leq \epsilon b^n.$
\end{lemma}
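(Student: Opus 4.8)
The plan is to prove Lemma~\ref{dirich} by a Dirichlet-style pigeonhole argument applied to the orbit of $\theta := \log_b a$ under addition modulo $1$. Since $|a^m-b^n|\le\epsilon b^n$ is equivalent to $b^{\,m\theta-n}\in[1-\epsilon,1+\epsilon]$, it suffices to produce a positive integer $p$ and a nonnegative integer $K$ for which $|p\theta-K|$ is small, and then to check that $m:=p$ and $n:=K$ respect the required bounds. Throughout I will use that $\theta>0$ (because $a,b\ge 2$) and that $\eta\ge\lceil\log_b a\rceil\ge\max\{1,\log_b a\}$.

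The key choice is the pigeonhole parameter $M:=\bigl\lceil\ln b/\ln(1+\epsilon)\bigr\rceil\ge 1$, made precisely so that $b^{1/M}\le 1+\epsilon$. First I would apply the pigeonhole principle to the $M+1$ fractional parts $\{0\cdot\theta\},\{1\cdot\theta\},\ldots,\{M\theta\}$ distributed among the $M$ subintervals $[k/M,(k+1)/M)$ of $[0,1)$: two of them, say those of $i\theta$ and $j\theta$ with $0\le i<j\le M$, fall in the same subinterval. Setting $p:=j-i\in\{1,\ldots,M\}$ and $K:=\lfloor j\theta\rfloor-\lfloor i\theta\rfloor$ (a nonnegative integer, as $\theta>0$), one gets $|p\theta-K|<1/M$.

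With $m:=p$ and $n:=K$ I would then write $a^m=b^n\,b^{\,p\theta-K}$ and split into two cases according to the sign of $p\theta-K$. If $p\theta-K\ge 0$, then $1\le b^{\,p\theta-K}<b^{1/M}\le 1+\epsilon$, so $0\le a^m-b^n<\epsilon b^n$. If $p\theta-K<0$, then $b^{-1/M}<b^{\,p\theta-K}<1$, so $0<b^n-a^m<b^n\bigl(1-b^{-1/M}\bigr)\le b^n\bigl(b^{1/M}-1\bigr)\le\epsilon b^n$, the middle step being the elementary inequality $b^{1/M}+b^{-1/M}\ge 2$. Either way $|a^m-b^n|\le\epsilon b^n$. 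It remains to bound $m$ and $n$: from $m=p\le M$, from $n=K\le j\theta\le M\theta=M\log_b a\le M\eta$, and from $\eta\ge 1$, it is enough to establish $M<(b-1)/\epsilon+1$.

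This last inequality is where the real work lies. If $\epsilon>b-1$ then $\ln b<\ln(1+\epsilon)$, so $M=1<(b-1)/\epsilon+1$, and the argument above still applies since $b^{1/M}=b\le 1+\epsilon$. If $0<\epsilon\le b-1$, then since $M<\ln b/\ln(1+\epsilon)+1$ it suffices to show $\epsilon\ln b\le(b-1)\ln(1+\epsilon)$, and I would prove this by a short calculus computation: the function $g(\epsilon):=(b-1)\ln(1+\epsilon)-\epsilon\ln b$ vanishes at $\epsilon=0$ and at $\epsilon=b-1$, and its derivative $g'(\epsilon)=(b-1)/(1+\epsilon)-\ln b$ is strictly decreasing with $g'(0)=(b-1)-\ln b>0$ (as $\ln b<b-1$ for $b\ge 2$), so $g$ first increases and then decreases on $[0,b-1]$ and is therefore nonnegative there. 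I expect this auxiliary inequality to be the only subtle point — note in particular that the cruder estimate $\ln b\le b-1$ by itself is not strong enough when $b>2$ — while the pigeonhole step and the case analysis are routine.
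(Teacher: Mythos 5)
Your proof is correct and takes a genuinely different route from the paper's, though both are Dirichlet pigeonhole arguments at heart. You pigeonhole the fractional parts $\{x\log_b a\}$ in $[0,1)$ into $M$ subintervals of width $1/M$, with $M := \lceil \ln b/\ln(1+\epsilon)\rceil$ chosen so that $b^{1/M}\le 1+\epsilon$, and then transport the resulting bound $|p\theta - K| < 1/M$ through the exponential to get $|a^m - b^n|\le\epsilon b^n$. The paper instead pigeonholes the values $a^x b^{-f_x}$ (where $f_x$ is defined by $a^x b^{-f_x}\in[1,b)$, i.e.\ $f_x=\lfloor x\log_b a\rfloor$) directly in the interval $[1,b)$, partitioned into pieces of length $\epsilon$. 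This is the same argument carried out in ``value space'' rather than ``exponent space,'' and the trade-offs are real: the paper's partition of $[1,b)$ by $\epsilon$ makes the bound $x<y\le (b-1)/\epsilon+1$ on the pigeonhole indices immediate, but it requires a case split on $a\ge b$ versus $a<b$ (the latter treated by replacing $a$ with $a^{\lceil\log_a b\rceil}$). Your version handles both cases uniformly via $\theta=\log_b a$ together with $\eta\ge\max\{1,\lceil\log_b a\rceil\}$, but it must convert the pigeonhole parameter $M$ back to the form $(b-1)/\epsilon+1$, which is exactly the auxiliary inequality $\epsilon\ln b\le(b-1)\ln(1+\epsilon)$ you prove by calculus — and you are right that this is the one non-routine step, since the crude bound $\ln b\le b-1$ does not suffice on its own. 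Both arguments land on the stated bound; the paper's is a little more self-contained, yours a little more symmetric.
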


\begin{proof}
  First suppose that $a \geq b$.
  Let $(f_x)_{x \in \mathbb{N}}$ be the sequence such that $a^xb^{-f_x} \in
  [1,b)$ for all $x \in \mathbb{N}$.  Then $0 \leq (\log_b a)x - f_x$, so
  $f_x \leq (\log_b a)x$.  Now by the pigeonhole principle there exist $x<y\leq (b-1)/\epsilon+1$ such that
  $\left|a^yb^{-f_y} - a^xb^{-f_x}\right| \leq \epsilon$; i.e.,
  \[\left|a^{y-x} - b^{f_y-f_x}\right| \leq \epsilon b^{f_y}a^{-x} \leq \epsilon b^{f_y-f_x}.\]
  Thus, we have $m = y-x \leq y \leq (b-1)/\epsilon+1$ and
  \[ n = f_y - f_x \leq f_y \leq (\log_b a)y \leq (\log_b
  a)((b-1)/\epsilon+1) \leq \eta ((b-1)/\epsilon+1), \]
  as required.

  Now suppose that $a<b$.  Applying the previous argument with
  $a^{\lceil\log_a b \rceil}$ in place of $a$ (where $\lceil \rho
  \rceil$ denotes the least integer greater than or equal to $\rho$) , we find that
  \[ m = \lceil\log_a b \rceil (y-x) \leq \lceil\log_a b \rceil y
  \leq \lceil\log_a b \rceil ((b-1)/\epsilon+1) \leq \eta
  ((b-1)/\epsilon+1), \]
  and
  \[n = f_y - f_x \leq f_y \leq \lceil\log_a b \rceil (\log_b a) y
  \leq \lceil\log_a b \rceil (\log_b a)((b-1)/\epsilon+1) \leq \eta
  ((b-1)/\epsilon+1),\]
  as required (the final inequality above follows from the fact that
  $\log_b a <1$ in this case).
\end{proof}

As in Lemma~\ref{dirich}, define $\eta := \max\{\lceil \log_a b
\rceil, \lceil \log_b a \rceil \}$ and also define $\theta := \max\{a,
b\}$.  We now define
\begin{align*}
    E(a,b,R,S) &:= \eta\left[6\left(2\theta^{(S+1)(R+1)}+1\right)(\theta-1)+1\right],\\
    A(a,b,R,S) &:= \left(2\theta^{(S+1)(R+1)}+2\right)\theta^{E(a,b,R,S)},
\end{align*}
and note that both these functions are symmetric under exchange of
their first two arguments and also under exchange of their last two
arguments.

\begin{theorem}\label{thm:common_prefix}
  Let $a,b \geq 2$ be multiplicatively independent integers.
  Let $g=(g_x)_{x \in \mathbb{N}}$ be computed by a
  DFAO $M_a = (S_a, D_a, \delta_a, s_{0,a}, \Delta_a, F_a)$ in base
  $a$ and let $f=(f_x)_{x \in \mathbb{N}}$ be computed by a DFAO
  $M_b = (S_b, D_b, \delta_b, s_{0,b}, \Delta_b, F_b)$ in base $b$.
  Suppose that $f$ and $g$ agree on a prefix of length
  $A(a,b,|S_a|,|S_b|)$.  Then $f$ and $g$ are equal and ultimately
  periodic.
\end{theorem}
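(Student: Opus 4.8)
The plan is to make Krebs' proof of Cobham's theorem quantitative. Throughout, put $R=|S_a|$, $S=|S_b|$, $\theta=\max\{a,b\}$, $\mu=(R+1)(S+1)$, and let $\eta$, $E:=E(a,b,R,S)$, $A:=A(a,b,R,S)$ be as above, so that $E=\eta[6(2\theta^{\mu}+1)(\theta-1)+1]$ and $A=(2\theta^{\mu}+2)\theta^{E}$. The normalization of Section~\ref{sec:normalization} has already been absorbed into the hypotheses — $M_a$ and $M_b$ are given over the digit sets $D_a$ and $D_b$ — so I would begin directly with these automata. The first substantive step is to apply Lemma~\ref{dirich} with $\epsilon := (b-1)/\bigl(6(2\theta^{\mu}+1)(\theta-1)\bigr)$, which is exactly the value for which the lemma's bound $\eta((b-1)/\epsilon+1)$ equals $E$; this yields non-negative integers $m,n<E$ with $|a^{m}-b^{n}|\le\epsilon b^{n}$. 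Since $E$ and $A$ are symmetric under simultaneously exchanging $a\leftrightarrow b$ and $R\leftrightarrow S$, I may assume, after interchanging the roles of the two sequences if necessary, that $a^{m}\ge b^{n}$; set $\ell:=a^{m}-b^{n}$, so that $0\le\ell\le\epsilon b^{n}$ and $a^{m},b^{n}\le\theta^{E}$.

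Next I would read the common prefix $w:=f_{0}f_{1}\cdots f_{A-1}=g_{0}g_{1}\cdots g_{A-1}$ in two ways. Since $a^{m},b^{n}\le\theta^{E}$ and $A=(2\theta^{\mu}+2)\theta^{E}$, the word $w$ breaks into at least $2\theta^{\mu}+2$ consecutive blocks of length $a^{m}$ and into at least that many of length $b^{n}$. The $j$-th length-$a^{m}$ block of $g$ is the length-$a^{m}$ prefix of the $a$-automatic sequence obtained by starting $M_a$ in the state $\delta_a(s_{0,a},\langle j\rangle_a)$, hence depends on that state alone; symmetrically the $j$-th length-$b^{n}$ block of $f$ depends only on $\delta_b(s_{0,b},\langle j\rangle_b)\in S_b$. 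Because $\ell\le\epsilon b^{n}$ is small and only $O(\theta^{\mu})$ blocks are in play, one checks that $j\ell<b^{n}$ throughout the relevant range, so the $j$-th $a^{m}$-block of $w$ straddles precisely the $j$-th and $(j+1)$-st $b^{n}$-blocks, with an explicitly described overlap.

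I would then apply the pigeonhole principle to the finitely many possible values of the pair $\bigl(\delta_a(s_{0,a},\langle j\rangle_a),\,\delta_b(s_{0,b},\langle j\rangle_b)\bigr)$ (passing to a short window of consecutive indices if the bookkeeping requires it): there are at most $|S_a|\,|S_b|<2\theta^{\mu}+2$ such values, so two indices $j_1<j_2$, still within the range of available blocks, produce the same value, whence the corresponding $a^{m}$-blocks coincide and the corresponding $b^{n}$-blocks coincide. Substituting these coincidences into the straddling identity and comparing symbol by symbol then forces the word $w$ — that is, both $f$ and $g$ restricted to $[0,A)$ — to have period $d:=(j_2-j_1)\ell$ on a sub-interval $I\subseteq[0,A)$; moreover the choice of $\epsilon$ keeps $d$ an extremely small fraction of $b^{n}$ (at most $\tfrac{b-1}{6(\theta-1)}b^{n}$, and in fact far smaller), so that $I$ contains more than enough full periods for the final step. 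It is precisely this requirement — $d$ tiny relative to $b^{n}$ while $m,n<E$ and the number of blocks stays at $O(\theta^{\mu})$ — that dictates the large exponent $\mu=(R+1)(S+1)$ and the factor $6(\theta-1)$ appearing in $E$.

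Finally I would upgrade this local periodicity to genuine ultimate periodicity by an effective form of the standard pumping/synchronization argument for automatic sequences: a sequence computed by a $(b,D_b)$-DFAO with $|S_b|$ states that is $d$-periodic on an interval whose length is a sufficiently large multiple of $d$ is ultimately periodic, with period and preperiod bounded in terms of $|S_b|$, $b$, and $d$. Applying this to $f$, and by the symmetric argument to $g$, both sequences are ultimately periodic; since they agree on the prefix of length $A$, which dwarfs the sum of their preperiods plus the least common multiple of their periods, they are in fact equal — and ultimately periodic, as claimed. The step I expect to be the main obstacle is the third one: one must simultaneously ensure that the pigeonhole fires among the $O(\theta^{\mu})$ available blocks, that the extracted period $d$ is small enough relative to $b^{n}$, $|S_b|$, and $b$ for the pumping step to apply, and that the interval $I$ carrying the period is long enough — reconciling these competing demands is exactly what forces the somewhat intricate definitions of $E(a,b,R,S)$ and $A(a,b,R,S)$.
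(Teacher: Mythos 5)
Your first three steps take a genuinely different route from the paper's. The paper never tiles the common prefix into $a^m$- and $b^n$-blocks; instead it works with the set $S_\infty$ of states of $M_b$ that are hit by infinitely many integers, constructs for each $s\in S_\infty$ a synchronizing pair $x_{st},y_{st}$ (via pumping a loop in $M_b$ and pigeonholing in $M_a$), and then runs a Krebs-style index-shuffling chain to get a local period $p_{st}=(x_{st}-y_{st})(a^m-b^n)$ on $[(x+\tfrac13)b^n,(x+\tfrac53)b^n]$ \emph{for every} $x$ whose $D_b$-representation reaches $s$. Your block-tiling plus pigeonhole on state pairs $\bigl(\delta_a(s_{0,a},\langle j\rangle_a),\delta_b(s_{0,b},\langle j\rangle_b)\bigr)$ is a clean alternative way to extract a local period $d=(j_2-j_1)\ell$ — the chain $f_{j_1b^n+z}=g_{j_1a^m+z-j_1\ell}=g_{j_2a^m+z-j_1\ell}=g_{j_2b^n+z+d}=f_{j_2b^n+z+d}=f_{j_1b^n+z+d}$ does close up — and it has the aesthetic advantage of not obviously needing the enlarged digit sets. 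But it produces local periodicity on a \emph{single} interval near the beginning of the common prefix, for one fixed pair $j_1,j_2$.

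That is where the gap lies. Your step four rests on the claim that a $(b,D_b)$-automatic sequence with $|S_b|$ states which is $d$-periodic on an interval of length a sufficiently large multiple of $d$ (in terms of $|S_b|$, $b$, $d$) must be ultimately periodic with controlled period and preperiod. This is false: the characteristic sequence of $\{b^n : n\ge 0\}$ is $b$-automatic with a small, fixed automaton and is $1$-periodic on arbitrarily long intervals (the runs of $0$'s between consecutive powers of $b$), yet it is aperiodic. More pointedly, the paper's own example at the end of Section~\ref{sec:long_prefix} — the characteristic sequence of $\{b^n : n\ge N\}$, with $N+2$ states — is $1$-periodic even on a \emph{prefix} of length $b^N-1$ and is still aperiodic, so no such pumping lemma can rescue the argument with bounds of the promised shape. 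Local periodicity on one window simply does not propagate for automatic sequences. The missing idea, which is the heart of the paper's proof and of Krebs', is to force local periodicity on a family of overlapping intervals that covers all sufficiently large indices: this is exactly what the set $S_\infty$ and the per-state synchronizing pairs $x_{st},y_{st}$ provide (for $x\ge 2b^{|S_b|+1}$, every representation reaches some $s\in S_\infty$, so the intervals $[(x+\tfrac13)b^n,(x+\tfrac53)b^n]$ cover a tail), after which the overlaps are long enough for Fine--Wilf to weld the local periods into a global one. Without that global coverage, the conclusion of ultimate periodicity is unsupported, so as written the proposal does not establish the theorem.
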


\begin{proof}
  Let $S_\infty$ be the subset of states of $M_b$ consisting of all
  states $s$ with the property that there are infinitely many numbers
  $x$ such that some representation $(x)_{D_b}$ reaches state $s$ in
  $M_b$.  For each $s \in S_\infty$, we claim that there must exist a
  state $t \in S_a$ and positive integers $x_{st}$ and
  $y_{st}$ such that some base-$b$ representations
  $(x_{st})_{D_b}$ and $(y_{st})_{D_b}$ both lead to state $s$ in $M_b$ and some
  base-$a$ representations $(x_{st})_{D_a}$ and $(y_{st})_{D_a}$ both lead to state
  $t$ in $M_a$.  We show this by giving an explicit upper bound on
  $x_{st}$ and $y_{st}$.

  If a string $W$ has length at least $|S_b|$, then any computation of
  $M_b$ on $W$ repeats a state.  Since for each $s \in S_\infty$ there
  are infinitely many $(x)_{D_b}$ that reach state $s$, there must exist some
  number $x_0$, some representation $(x_0)_{D_b}$, and some factorization
  $(x_0)_{D_b} = uw$ with the following properties:
  \begin{itemize}
  \item $|(x_0)_{D_b}| \leq |S_b|$.
  \item There exists a non-empty $v$ such that $|v| \leq |S_b|$ and
    $uv^iw$ reaches $s$ for all $i \geq 0$.
  \end{itemize}
  For $1 \leq i \leq |S_a|$, let $x_i$ be the integer such that
  $(x_i)_{D_b} = uv^iw$.  Then the numbers $x_i$ are all distinct.  Now
  consider the states reached in $M_a$ by some choice of representations
  $(x_i)_{D_a}$, for $0 \leq i \leq |S_a|$.  There must be two such
  numbers $x_i$ and $x_j$ such that $(x_i)_{D_a}$ and $(x_j)_{D_a}$ reach the
  same state $t$ in $M_a$. We choose these as our $x_{st}$ and
  $y_{st}$.  Finally, we note that for $0 \leq i \leq |S_a|$, we
  have $|(x_i)_{D_b}| \leq |S_b|(|S_a|+1)$, which gives the bound $x_{st}, y_{st} \leq
  2b^{|S_b|(|S_a|+1)+1}$.

  Let $\xi := \max\{x_{st},y_{st} \,|\, s \in S_\infty\}+1 \leq
  2b^{|S_b|(|S_a|+1)+1}+1$.  By Lemma~\ref{dirich}, there exist
  \[ m,n \leq \eta[6\xi(b-1)+1] \leq E(a,b,|S_a|,|S_b|) \] such
  that $\xi|a^m-b^n| \leq \frac16 b^n$.  As defined in \cite{Kre18},
  let $p_{st} := (x_{st} - y_{st})(a^m-b^n)$ (swapping $x_{st}$ and
  $y_{st}$ if necessary, so that $p_{st}>0$), and note that from
  \cite{Kre18} we have $p_{st} \leq \frac16 b^n$.  Let $z$ be any
  integer such that
  $z, z+p_{st} \in \left[\frac13 b^n, \frac53 b^n\right]$.  In
  particular, there exist representations $(z)_{D_b}$ and $(z+p_{st})_{D_b}$
  such that $|(z)_{D_b}|, |(z+p_{st})_{D_b}| \leq n$.  In what follows, we
  specifically use the representations of $z$ and $z+p_{st}$ that
  satisfy this condition on their lengths.  We also note that by the
  calculation in \cite{Kre18}, we have $z-y_{st}(a^m-b^n) \leq 2a^m$,
  so there is also a representation $(z-y_{st}(a^m-b^n))_{D_a}$ of length
  at most $m$.

  Let $x$ be any integer such that some representation $(x)_{D_b}$ goes to state $s$ in
  $M_b$.  Recall that $(x_{st})_{D_b}$ and $(y_{st})_{D_b}$ go to state $s$ in $M_b$
  and $(x_{st})_{D_a}$ and $(y_{st})_{D_a}$ go to state $t$ in $M_a$.  If $f$ and $g$
  agree on a sufficiently long prefix (to be specified later), then we have
  \begin{align*}
    f_{xb^n+z} &= f_{y_{st}b^n+z} &\text{(since $(x)_{D_b}$ and $(y_{st})_{D_b}$ go to
    state $s$ in $M_b$)}\\
    &= f_{y_{st}a^m + z - y_{st}(a^m-b^n)} &\text{(rewriting the index)}\\
    &= g_{y_{st}a^m + z - y_{st}(a^m-b^n)} &\text{(since $f$ and $g$ agree)}\\
    &= g_{x_{st}a^m + z - y_{st}(a^m-b^n)} &\text{(since $(y_{st})_{D_a}$ and
                                             $(x_{st})_{D_a}$ go to state $t$ in $M_a$)}\\
    &= g_{x_{st}b^n + z + p_{st}} &\text{(rewriting the index)}\\
    &= f_{x_{st}b^n + z + p_{st}} &\text{(since $f$ and $g$ agree)}\\
    &= f_{xb^n + z + p_{st}} &\text{(since $(x_{st})_{D_b}$ and $(x)_{D_b}$ go to
                               state $s$ in $M_b$)}.
  \end{align*}
  For this calculation to be correct, the two sequences $f$
  and $g$ should agree on a prefix of length
  \begin{align*}
    \max\{y_{st},x_{st} \,|\, s\in S_\infty\}a^m + z - y_{st}(a^m-b^n) &\leq
    (\xi-1)a^m + z - y_{st}(a^m-b^n)\\
    &\leq (\xi-1)a^m + 2a^m\\
    &\leq (\xi+1)a^m.
  \end{align*}
  Now $\xi \leq 2b^{|S_b|(|S_a|+1)+1}+1$, so we have
  \begin{align*}
    (\xi+1)a^m &\leq \left(2b^{|S_b|(|S_a|+1)+1}+2\right)a^m \\
    &\leq \left(2b^{|S_b|(|S_a|+1)+1}+2\right)a^{E(a,b,|S_a|,|S_b|)}\\
    &\leq A(a,b,|S_a|,|S_b|).
  \end{align*}

  Thus, if $f$ and $g$ agree on a prefix of length
  $A(a,b,|S_a|,|S_b|)$, then $f$ has a local period
  \[
  p_{st} \leq \frac16 b^n \leq \frac16 b^{E(a,b,|S_a|,|S_b|)}
  \] on the
  interval $[(x+1/3)b^n,(x+5/3)b^n]$.  By the same argument as in
  \cite{Kre18}, the sequence $f$ is ultimately periodic.  We will show
  further that the periodicity begins after a prefix of length at most
  \[
  \left(2b^{|S_b|+1}+\frac13\right)b^n \leq
  \left(2b^{|S_b|+1}+\frac13\right)b^{E(a,b,|S_a|,|S_b|)}.
  \]
  Any representation $(x)_{D_b}$ of length $|S_b|$ must reach a state in $S_\infty$.
  Therefore if $x = 2b^{|S_b|+1}$, then for every $y \geq x$, every
  representation $(y)_{D_b}$ reaches a state in $S_\infty$.  Now by the argument
  of \cite{Kre18}, the sequence $f$ has period $p_f := p_{st}$ starting from
  index $$i_f := \left(2b^{|S_b|+1}+\frac13\right)b^n.$$

  By a similar argument (with the roles of $f$ and $g$ reversed) we
  find that if $f$ and $g$ agree on a prefix of length
  $A(a,b,|S_a|,|S_b|)$, then $g$ has period $p_g$ starting from some
  index $i_g$, where $p_g$ and $i_g$ are defined analogously to $p_f$
  and $i_f$.  Now, we have
  \[
  \max\{p_f, p_g\} \leq \frac16 \theta^{E(a,b,|S_a|,|S_b|)},
  \]
and
  \begin{align*}
  \max\{i_f,i_g\} &\leq \max
  \left\{\left(2b^{|S_b|+1}+\frac13\right)b^{E(a,b,|S_a|,|S_b|)},
  \left(2a^{|S_a|+1}+\frac13\right)a^{E(b,a,|S_b|,|S_a|)} \right\}\\
  &\leq \max \left\{\left(2b^{|S_b|+1}+\frac13\right),
    \left(2a^{|S_a|+1}+\frac13\right) \right\}
    \theta^{E(a,b,|S_a|,|S_b|)},
  \end{align*}
  so
  \begin{align*}
  \max\{i_f, i_g\} + p_f + p_g &\leq \max \left\{\left(2b^{|S_b|+1}+\frac23\right),
    \left(2a^{|S_a|+1}+\frac23\right) \right\} \theta^{E(a,b,|S_a|,|S_b|)}\\
  &\leq A(a,b,|S_a|,|S_b|).
  \end{align*}
  Therefore, by the Fine--Wilf theorem~\cite[Theorem~2.3.5]{Sha09}, the sequences $f$ and $g$ are
  equal.
\end{proof}

In the next corollary, let $\exp_r(x)$ denote the function $r^x$.

\begin{corollary}
  Let $a,b \geq 2$ be multiplicatively independent integers. 
  Let $g = (g_x)_{x \in \mathbb{N}}$ and
  $f = (f_x)_{x \in \mathbb{N}}$ be sequences over a set $\Delta$ of
  size $d$.  Suppose that $g$ is computed by an $a$-DFAO $M_a'$ with
  $R$ states and $f$ is computed by a $b$-DFAO $M_b'$ with $S$ states.
  There is a positive constant $C$ (depending only on $a$ and $b$)
  such that if $f$ and $g$ agree on a prefix of length
  \begin{equation}\label{nice_bound}
  \exp_\theta(\exp_\theta(CR^4S^4))
  \end{equation}
  then $f$ and $g$ are equal and ultimately periodic.
\end{corollary}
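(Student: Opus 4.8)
The plan is to derive the corollary from Theorem~\ref{thm:common_prefix} simply by bounding the explicit function $A(a,b,R,S)$ from above in a cruder, more transparent form. First I would recall that we are handed $a$-DFAO $M_a'$ with $R$ states and $b$-DFAO $M_b'$ with $S$ states, each reading canonical digits. To apply Theorem~\ref{thm:common_prefix} we must pass to DFAOs reading the extended digit sets $D_a$ and $D_b$; by the normalization construction in Section~\ref{sec:normalization} (equation~\eqref{conversion_blowup}), from $M_a'$ we obtain an $(a,D_a)$-DFAO $M_a$ with at most $R^4$ states, and likewise from $M_b'$ a $(b,D_b)$-DFAO $M_b$ with at most $S^4$ states. (Note that the conversion blowup exponent $\gamma$ is at most $4$ regardless of the base, so the uniform fourth-power bound is valid for both $a$ and $b$.) Now Theorem~\ref{thm:common_prefix} applies to $M_a$ and $M_b$: if $f$ and $g$ agree on a prefix of length $A(a,b,|S_a|,|S_b|) \le A(a,b,R^4,S^4)$, then they are equal and ultimately periodic. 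So it suffices to show $A(a,b,R^4,S^4) \le \exp_\theta(\exp_\theta(CR^4S^4))$ for a suitable constant $C = C(a,b)$.

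The bulk of the work is this elementary estimate. Writing $P := (S^4+1)(R^4+1)$, so that $P \le 4R^4S^4$ say (for $R,S \ge 1$), I would bound the exponent in $E$ first: $E(a,b,R^4,S^4) = \eta[6(2\theta^{P}+1)(\theta-1)+1]$. Since $\eta$, $\theta-1$, and the constant $6$ depend only on $a$ and $b$, and since $2\theta^{P}+1 \le 3\theta^{P} \le \theta^{P+c_1}$ for a constant $c_1 = c_1(a,b)$ with $\theta^{c_1}\ge 3$, we get $E \le c_2\,\theta^{P} \le \theta^{P+c_3} \le \theta^{c_4 R^4 S^4}$ for constants $c_2,c_3,c_4$ depending only on $a,b$ (absorbing $P\le 4R^4S^4$ and the additive constant into the multiplicative constant $c_4$, legitimate since $R^4S^4 \ge 1$). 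Then $A(a,b,R^4,S^4) = (2\theta^{P}+2)\theta^{E} \le \theta^{P+c_5}\cdot\theta^{E} = \theta^{P + c_5 + E} \le \theta^{2E}$ once $E$ dominates $P+c_5$ (which holds for a suitable constant, enlarging $c_4$ if necessary). Finally $\theta^{2E} \le \theta^{\theta^{c_6 R^4 S^4}}$ for yet another constant $c_6$ depending only on $a,b$, absorbing the factor $2$ and the exponent $c_4$ into $c_6$; this is exactly $\exp_\theta(\exp_\theta(C R^4 S^4))$ with $C := c_6$. The dependence on the alphabet size $d$ never enters the bound, which is why $\Delta$ and $d$ appear only to fix the setting.

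The only genuine subtlety — the step I would watch most carefully — is making sure each ``absorb into the constant'' move is legitimate, i.e.\ that the quantities being absorbed are bounded by a fixed power of $\theta$ \emph{uniformly in $R$ and $S$}; this is fine because every additive or multiplicative constant appearing ($\eta$, $6$, $\theta-1$, the $+1$'s, the $+2$'s, the factor $2$) depends only on $a$ and $b$, while the genuinely $R,S$-dependent part, $(S^4+1)(R^4+1) \le 4R^4S^4$, sits polynomially in the exponent and so can be folded into the constant in front of $R^4S^4$ inside the innermost exponential. I should also double-check the direction of the inequality $A(a,b,|S_a|,|S_b|) \le A(a,b,R^4,S^4)$: this uses that $A$ is monotone nondecreasing in its last two arguments (clear from the formula, since $\theta \ge 2$ and all exponents are increasing in $R,S$) together with $|S_a| \le R^4$ and $|S_b| \le S^4$. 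Everything else is routine bookkeeping with powers of $\theta$.
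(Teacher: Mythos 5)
Your proposal is correct and follows essentially the same route as the paper: convert the canonical DFAOs to $(a,D_a)$- and $(b,D_b)$-DFAOs with at most $R^4$ and $S^4$ states via the normalization construction and the bound~\eqref{conversion_blowup}, apply Theorem~\ref{thm:common_prefix}, and then estimate $A(a,b,R^4,S^4)$ by absorbing all $a,b$-dependent quantities into a single constant. The paper states this tersely (``simplifying the resulting expression''); you have merely filled in the arithmetic, and your explicit remark that $A$ is monotone in its last two arguments (so $|S_a|\le R^4$, $|S_b|\le S^4$ suffices) is a small but genuine clarification of a point the paper elides.
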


\begin{proof}
We have previously observed that conversion from a $b$-DFAO to a
$(b, D_b)$-DFAO increases the number of states to at most the
quantity \eqref{conversion_blowup}.  We apply the bound of
Theorem~\ref{thm:common_prefix} with
\[
|S_a| = R^4\text{ and } |S_b| = S^4.
\]
Simplifying the resulting expression, we find that
there is a positive constant $C$ such that the bound of
Theorem~\ref{thm:common_prefix} is at most the quantity
\eqref{nice_bound}.
\end{proof}

Note that the bound on the length of the common prefix that we obtain
seems absurdly large compared to what seems likely to be the optimal
bound.  It is not too difficult to give an example where the common
prefix has length that is (singly) exponential in the size of the
defining automata.  For instance, let $g$ be the constant (and hence
$a$-automatic) sequence $(0,0,0,0,\ldots)$.  Fix some positive integer
$N$ and let $f$ be the characteristic sequence of the set $\{b^n-1 : n
\geq N\}$.  Then $f$ is an aperiodic $b$-automatic sequence.  Indeed,
a $b$-DFAO $M$ generating $f$ can be obtained from the $N+2$ state DFA
accepting the regular language $0^*(b-1)^N(b-1)^*$ by making the
accepting state output $1$ and all other states output $0$.  Then $M$
has $N+2$ states and the sequences $f$ and $g$ agree on a prefix of
length $b^N-1$.

Now we examine the connection to \emph{automaticity}.
The \emph{$b$-automaticity} of a sequence $g$ is the function
$A_g^b(n)$ whose value is the least number of states required in a
$b$-DFAO that computes a prefix of $g$ of length $n$.
Shallit \cite[Proposition 1(c)]{Sha96} showed that if $g$ is not
$b$-automatic, then there is a constant $c$ such that $A_g^b(n) \geq
c\log_b n$ for infinitely many $n$.

\begin{corollary}
 Let $a,b$ be multiplicatively independent integers with $a,b \geq 2$.
 There is a positive constant $D$ such that the $b$-automaticity
 $A_g^b(n)$ of an  aperiodic $a$-automatic sequence $g$
 satisfies $$A_g^b(n) >  D(\log \log n)^{1/4},$$
 for all $n \geq 0$.
\end{corollary}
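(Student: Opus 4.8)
The plan is to deduce this corollary directly from the previous corollary, which bounds the length of the common prefix forcing equality in terms of the numbers of states of the two automata. Concretely, suppose for contradiction that $g$ is an aperiodic $a$-automatic sequence and that there is some $n$ for which $A_g^b(n)$ is small, say $A_g^b(n) = S$. Then by definition of automaticity there is a $b$-DFAO $M_b'$ with $S$ states that computes the length-$n$ prefix of $g$; let $f$ be the $b$-automatic sequence generated by $M_b'$. By construction $f$ and $g$ agree on a prefix of length $n$. On the other hand, $g$ itself is $a$-automatic, so it is computed by some fixed $a$-DFAO $M_a'$ with some fixed number of states $R$ depending only on $g$ (not on $n$). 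Apply the previous corollary: if $n$ is at least $\exp_\theta(\exp_\theta(CR^4S^4))$, then $f$ and $g$ are equal and ultimately periodic, contradicting the aperiodicity of $g$.

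Hence we must have $n < \exp_\theta(\exp_\theta(CR^4S^4))$, i.e. $\log_\theta\log_\theta n < CR^4S^4$ (after discarding the finitely many small $n$ for which the double logarithm is not well-behaved, or simply absorbing them into the constant). Rearranging, $S^4 > (\log_\theta\log_\theta n)/(CR^4)$, so $S = A_g^b(n) > \bigl((\log_\theta \log_\theta n)/(CR^4)\bigr)^{1/4}$. Since $R$, $C$, $a$, $b$ are all fixed, this is of the form $D(\log\log n)^{1/4}$ for a suitable positive constant $D$ (changing from $\log_\theta$ to $\log$ only rescales the constant, and the nested logarithm absorbs additive constants harmlessly). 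Finally, to get the bound "for all $n \geq 0$" rather than just for large $n$, one shrinks $D$ if necessary so that the trivial bound $A_g^b(n) \geq 1$ (every DFAO has at least one state) covers the small values of $n$ where $(\log\log n)^{1/4}$ is undefined or tiny; this is the standard device of choosing the constant small enough.

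I do not anticipate a genuine obstacle here — the corollary is essentially a logarithmic repackaging of the previous one. The only points requiring a little care are: (i) making sure $R$, the state count of a fixed $a$-DFAO for $g$, really is a constant independent of $n$ (it is, since $g$ is a single fixed sequence); (ii) handling the range of $n$ where $\log\log n$ is negative or zero, which is dispatched by absorbing it into the constant $D$ together with the universal lower bound $A_g^b(n) \geq 1$; and (iii) tracking that the conversion costs and the $\exp_\theta$'s only change the implied constant, not the $(\log\log n)^{1/4}$ shape. So the write-up is short: invoke the previous corollary with $f$ the sequence generated by an optimal-size $b$-DFAO computing the prefix of $g$, contrapose, take double logarithms, and adjust constants.
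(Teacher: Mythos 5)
Your proposal is correct and follows essentially the same route as the paper: fix an $a$-DFAO for $g$ with a constant number of states, take a minimal $b$-DFAO computing a length-$n$ prefix, apply the preceding corollary to force $n < \exp_\theta(\exp_\theta(CR^4S^4))$ (else $g$ would be ultimately periodic), and invert. Your extra remarks about absorbing small $n$ into the constant are a minor tidiness point the paper leaves implicit.
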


\begin{proof}
  Let $M_a$ be an $a$-DFAO computing $g$ and let $M_{b,n}$ be a
  $b$-DFAO computing a sequence $f$ that agrees with
  $g$ on a prefix of length $n$.  Suppose that $M_a$ has $E$ states
  and that $M_{b,n}$ has $S_n$ states.  
  Since $g$ is aperiodic, by \eqref{nice_bound} we have
  \[
  n < \exp_\theta(\exp_\theta(CE^4(S_n)^4))
  \]
  Treating $E$ as a constant, we get
  \[
  S_n > \left(\frac{1}{C^{1/4}E}\right) (\log_\theta \log_\theta n)^{1/4}\\
  = D(\log \log n)^{1/4},
  \]
  for some positive constant $D$.
\end{proof}

Note that while this may seem weaker than the $c\log_b n$ lower bound
mentioned previously, the former only holds for infinitely many $n$,
whereas our lower bound holds for all $n$.  Without the assumption
that $g$ is $a$-automatic, the $b$-automaticity of $g$ could
potentially be constant for long stretches, and only for very sparsely
distributed values of $n$ satisfy $A_g^b(n) \geq c\log_b n$.  Our
result shows that under the assumption that $g$ is $a$-automatic,
the function $A_g^b(n)$ cannot be constant for too long.

On the other hand, our lower bound on the $b$-automaticity does seem
to be rather weak compared to what can be proved for specific sequences.
Shallit \cite{Sha96} showed that if $p$ is the fixed point of $0 \to 01$,
$1 \to 00$, then for $k$ odd, we have $A_p^k(n) = \Omega(n^{1/2}/k)$,
and if $t$ is the fixed point of $0 \to 01$, $1 \to 10$ (the
Thue--Morse word), then for $k$ odd, we have $A_t^k(n) =
\Omega(n^{1/4}/k^{1/2})$.

\section{Common factors of $b$-automatic and Sturmian\\ sequences}\label{sec:sturmian}
As mentioned in the introduction, the problem of bounding the length
of the longest common prefix of a $b$-automatic sequence and a
Sturmian sequence was addressed by Shallit \cite{Sha96}.  In this
section, we show that two such sequences cannot have arbitrarily large
factors in common.  Our main result is the following:

\begin{theorem}\label{main}
Let $f$ be a $b$-automatic sequence and let $g$ be a Sturmian
sequence.  There exists a constant $C$ (depending on $f$ and $g$) such
that if $f$ and $g$ have a factor in common of length $n$, then $n \leq
C$.
\end{theorem}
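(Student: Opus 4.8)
The plan is to argue by contradiction: suppose $f$ and $g$ share arbitrarily long common factors, and derive that $g$ would have to be $b$-automatic, contradicting the well-known fact (recalled in the introduction, via the irrationality of the frequency of $1$'s) that a Sturmian sequence is never $b$-automatic. First I would recall two standard structural facts. The factor set of a $b$-automatic sequence $f$ is itself a ``nice'' object: there is a uniform bound, coming from the pumping lemma applied to the DFAO for $f$ (or to the associated $b$-kernel), controlling which words occur as factors; in particular the set of factors of $f$ is closed under certain base-$b$ operations. On the Sturmian side, the key rigidity is that a Sturmian sequence $g$ of slope $\alpha$ has low factor complexity $n+1$ and, more importantly, every sufficiently long factor of $g$ determines the slope $\alpha$ up to arbitrarily good precision: a factor of length $n$ pins down $\alpha$ to an interval of length $O(1/n)$ (this is the standard three-distance / continued-fraction analysis of Sturmian factors). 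So if $f$ has factors in common with $g$ of every length, then $f$ contains, for each $n$, a factor of length $n$ that ``looks Sturmian of slope $\alpha$'' to within $O(1/n)$.

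The heart of the argument is then to exploit the self-similarity of $b$-automatic sequences against this Sturmian rigidity. Here the cleanest route is to use the fact that $f$ has only finitely many distinct subsequences along arithmetic-like progressions indexed by the $b$-kernel, and more concretely that, because a common factor of length $n$ occurs in $f$ at some position $p_n$, by the pigeonhole principle we may pass to positions of the form $p_n = [u w^{i} v]_b$ with a pumpable block $w$ — exactly the kind of construction used in the proof of Theorem~\ref{thm:common_prefix}. Pumping such a block multiplies the position by roughly a power of $b$; applying this to a long ``Sturmian-looking'' factor and using that the induced shift of a Sturmian sequence by $b^k$ is again Sturmian with the \emph{same} slope $\alpha$, one forces relations on $\alpha$ that, combined across two different pumpable blocks of incommensurate lengths, squeeze $\alpha$ to be rational — the final contradiction. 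Alternatively, and perhaps more transparently, one shows directly that arbitrarily long common factors force the frequency of $1$'s in windows of $f$ to converge to $\alpha$, while the same frequency along the $b$-automatic sequence must converge (if it converges at all) to a rational number by \cite[Thm.~6, p.~180]{C}; since a long common factor of length $n$ forces the frequency in that window to be within $O(1/n)$ of $\alpha$, and automaticity forces it within $O(1/n)$ of a fixed rational, taking $n \to \infty$ gives $\alpha \in \mathbb{Q}$, contradicting irrationality of the slope.

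I expect the main obstacle to be making the ``a long Sturmian factor approximately determines the slope, and approximately determines the local frequency of $1$'s'' step fully rigorous with explicit constants, and dovetailing it with the automaticity of $f$: one must be careful that a single common factor only gives information about \emph{one} window of $f$, not about a genuine limiting frequency of $f$, so the cleanest formulation avoids invoking the limiting frequency of $f$ directly and instead compares, for each $n$, the frequency of $1$'s in the length-$n$ common factor against what the DFAO for $f$ can realize. Concretely, the subtlety is that an automatic sequence need not \emph{have} a letter frequency, so one cannot literally quote Cobham's frequency theorem; the fix is to use the finiteness of the $b$-kernel to bound how much the frequency of $1$'s in length-$n$ windows of $f$ can vary, or to use the pumping/self-similar argument above which sidesteps frequencies entirely. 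Once that technical point is handled, the constant $C$ emerges as the length beyond which a Sturmian factor of slope $\alpha$ forces an approximation to $\alpha$ good enough to contradict the (finitely many, all rational, or otherwise constrained) possible frequencies dictated by the automaton for $f$.
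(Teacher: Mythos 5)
Your high-level strategy (derive a contradiction from $b$-kernel finiteness against Sturmian rigidity) is the right general direction, but neither of the two arguments you sketch actually closes, and the gaps are real. In the first sketch, the claim that ``the induced shift of a Sturmian sequence by $b^k$ is again Sturmian with the \emph{same} slope $\alpha$'' is false: the decimation $(g_{nb^k+c})_n$ is governed by the irrational $\{b^k\alpha\}$, not by $\alpha$, and is in general not even Sturmian. More fundamentally, pumping a path in the DFAO only relocates a \emph{single} occurrence of the common factor in $f$ to other positions; it does not by itself produce the constraint on $\alpha$ you want, and ``two pumpable blocks of incommensurate lengths'' would still need a concrete mechanism to squeeze $\alpha$ into $\mathbb{Q}$. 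In the second sketch you correctly spot the central obstacle — a $b$-automatic sequence need not have a letter frequency, so Cobham's frequency theorem doesn't apply directly — but the proposed fix, ``use the finiteness of the $b$-kernel to bound how much the frequency of $1$'s in length-$n$ windows of $f$ can vary,'' is not a known or easy consequence of kernel finiteness: local frequencies in automatic sequences can vary enormously across windows (e.g.\ the characteristic sequence of powers of $b$ has windows of frequency $0$), so there is no finite menu of ``possible frequencies dictated by the automaton'' to contradict.

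The paper's argument uses the $b$-kernel in a sharper, non-frequency way. Since the $b$-kernel is finite, pick $r$ with $b^r$ larger than the number of kernel elements; then two residues $0\le s_1<s_2<b^r$ satisfy $(f_{nb^r+s_1})_n=(f_{nb^r+s_2})_n$ identically. If $f$ and $g$ share a factor of length $L$, say $f_{i+\ell}=g_{j+\ell}$ for $0\le \ell<L$, then translating this alignment gives $g_{nb^r+d_1-1}=g_{nb^r+d_2-1}$ for all $n$ in an interval of length growing with $L$, where $d_k=s_k+j-i+1$. Writing $g_n=[\{(n+1)\alpha\}<\alpha]$ and choosing $\epsilon<\{d_2\alpha\}-\{d_1\alpha\}$ (a fixed positive quantity depending only on $s_1,s_2,j-i$), Kronecker's density theorem for $\{nb^r\alpha\}$ produces an $N$ in that interval with $\{Nb^r\alpha+d_2\alpha\}\in[\alpha,\alpha+\epsilon]$, forcing $g_{Nb^r+d_2-1}=0$ while $g_{Nb^r+d_1-1}=1$ — a contradiction. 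The key move you are missing is this comparison of two \emph{parallel arithmetic progressions} (rather than window frequencies), which converts kernel finiteness into a combinatorial constraint that equidistribution can violate. If you want to salvage a frequency-style argument, you would need a genuine theorem constraining local frequencies of automatic sequences, which is considerably harder than the path the paper takes.
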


Note that this result would follow fairly easily from the frequency
results mentioned previously, \emph{if} $f$ is \emph{uniformly
  recurrent} (meaning that every factor $z$ of $f$ occurs infinitely
often, and with bounded gap size between two consecutive occurrences).
However, unlike Sturmian sequences, automatic sequences need not be
uniformly recurrent: consider, for example, the $2$-automatic sequence
that is the characteristic sequence of the powers of $2$.  Our proof
is therefore based on the finiteness of the \emph{$b$-kernel} of
$f$, along with the \emph{uniform distribution} property of
Sturmian sequences (this is similar to the techniques used in 
\cite{Sha96}).

\begin{proof}
Let $f=f_0f_1 \cdots$ and $g=g_0g_1\cdots$, where $g$ has slope
$\alpha$ and intercept $\beta$.  Since the factors of a
Sturmian word do not depend on $\beta$, without loss of
generality, we may suppose that $\beta = 0$ (or, in other words, that
$g$ is a \emph{characteristic word}).  Then $g$ can be defined by the
following rule:
\[
g_n = \begin{cases}1, &\text{ if } \{(n+1)\alpha\} < \alpha;\\
0, &\text{ otherwise.}\end{cases}
\]
Here $\{\cdot\}$ denotes the fractional part of a real number.

Suppose that for some integer $L$, the words $f$ and $g$ have a factor of
length $L$ in common: i.e., for some $i \leq j$, we have
\[ f_i \cdots f_{i+L-1} = g_j \cdots g_{j+L-1}. \]
(We may assume that $i \leq j$ since $g$ is recurrent, but this is not
important for what follows.)  Suppose that the $b$-kernel of $f$,
\[
\{ (f_{nb^r+s})_{n \geq 0} : r \geq 0 \text{ and } 0 \leq s < b^r \},
\]
has $Q$ distinct elements.  Let $r$ satisfy $b^r > Q$.  There there
exist integers $s_1, s_2$ with $0 \leq s_1 < s_2 < b^r$ such that
\[
(f_{nb^r+s_1})_{n \geq 0} = (f_{nb^r+s_2})_{n \geq 0}.
\]

Define
\begin{align*}
d_1 &:= s_1 + j-i + 1,\\
d_2 &:= s_2 + j-i + 1.
\end{align*}
For all $n$ satisfying $i \leq nb^r+s_1$ and $nb^r+s_2 \leq i+L-1$ we
have $f_{nb^r+s_1} = g_{nb^r+d_1-1}$ and $f_{nb^r+s_2} =
g_{nb^r+d_2-1}$.  Since $f_{nb^r+s_1} = f_{nb^r+s_2}$, we have $g_{nb^r+d_1-1} =
g_{nb^r+d_2-1}$.  This means that either the inequalities
\begin{equation}\label{both0}
\{(nb^r+d_1)\alpha\} < \alpha \text{ and } \{(nb^r+d_2)\alpha\} < \alpha
\end{equation}
both hold, or the inequalities
\begin{equation}\label{both1}
\{(nb^r+d_1)\alpha\} \geq \alpha \text{ and } \{(nb^r+d_2)\alpha\} \geq \alpha
\end{equation}
both hold.

If $L$ is arbitrarily large, then there exist arbitrarily large sets
$I$ of consecutive positive integers such that every $n\in I$
satisfies either \eqref{both0} or \eqref{both1}.  Without loss of
generality, suppose that $\{d_2\alpha\} > \{d_1\alpha\}$. Choose
$\epsilon>0$ such that $\epsilon < \{d_2\alpha\} - \{d_1\alpha\}$.
Note that $d_2-d_1 = s_2 - s_1$, so $\epsilon$ does not depend on $L$
(or $I$).  Since $b^r\alpha$ is irrational, if $I$ is sufficiently
large, then by Kronecker's theorem (which asserts that the set of
points $\{n\alpha\}$ is dense in $(0,1)$) there exists $N \in I$ such
that
\[
\{N(b^r\alpha)+d_2\alpha\} \in [\alpha,\alpha+\epsilon].
\]
By the choice of $\epsilon$, this implies that
\[
\{N(b^r\alpha)+d_2\alpha\} \geq \alpha \text{ and }
\{N(b^r\alpha)+d_1\alpha\} < \alpha,
\]
contradicting the assumption that $N$ satisfies one of \eqref{both0}
or \eqref{both1}.  The contradiction means that $L$ must be bounded by
some constant $C$, which proves the theorem.
\end{proof}

\begin{example}
Consider the Thue-Morse word $t = 01101001\cdots$,
and the Fibonacci word $f = 01001010\cdots$ given by the
fixed point of $0 \rightarrow 01$ and $1 \rightarrow 0$.  The latter
is Sturmian.  The set of common factors is
\begin{multline*}
\{ \epsilon, 0, 1, 00, 01, 10, 001, 010, 100, 101, 0010, 0100, 0101, 1001, 1010, \\ 
00101, 01001, 10010, 10100,
010010, 100101, 101001, 0100101, 1010010, 10100101 \},
\end{multline*}
so $C = 8$.
\end{example}

\section{Final thoughts}
As noted at the end of Section~\ref{sec:cobham}, the
$\Omega((\log \log n)^{1/4})$ lower
bound we obtain on the $b$-automaticity of an aperiodic $a$-automatic
sequence is surely not optimal.  Sequences with $O(\log n)$ (i.e., ``low'')
$b$-automaticity are called \emph{$b$-quasiautomatic} in
\cite{Sha96}.  It seems unlikely that an aperiodic $a$-automatic
sequence can even be $b$-quasiautomatic.  Known examples of
$b$-quasiautomatic sequences strongly resemble $b$-automatic
sequences.  For example, the fixed point of the morphism $c \to cba$,
$a \to aa$, $b \to b$, starting with $c$, is $2$-quasiautomatic, but
not $2$-automatic \cite{Sha96}.  Similarly, the fixed point of $1 \to
121$, $2 \to 12221$ is not $2$-automatic \cite{AAS06} but is
conjectured to be $2$-quasiautomatic \cite{Sha96}.  Curiously, this
latter sequence is automatic with respect to the positional numeration
system (and a certain choice of canonical representations) whose place
values are given by the sequence $((2^n - (-1)^n)/3)_{n \geq 1}$
\cite{AAS06}.

We conclude by again mentioning the problem stated in the
introduction of characterizing the common factors of a $b$-automatic
sequence and an $a$-automatic sequence.  Can the method of Krebs be
applied to this problem?

\section*{Acknowledgments}

We thank Jean-Paul Allouche for helpful discussions.  The normalization construction of Section~\ref{sec:normalization} was obtained in discussions with
\'Emilie Charlier, Julien Leroy, and Michel Rigo of the University of
Li\`ege.  We thank them for their help with this problem.

After we posted an initial version of this paper on the arXiv, Thijmen Krebs
contacted us with a number of very helpful comments.  He
clarified some important points regarding his paper, and gave several
suggestions which greatly simplified and improved the presentation of
the normalization construction.  We are very grateful for his
feedback, which significantly improved the exposition in
Section~\ref{sec:normalization}.


\begin{thebibliography}{9}
\bibitem{AAS06}
G. Allouche, J.-P. Allouche, and J. Shallit, ``Kolam indiens, dessins
sur le sable aux \^iles Vanuatu, courbe de Sierpinski et morphismes de
mono\"ide'', Ann. Inst. Fourier, Grenoble 56 (2006), 2115--2130.

\bibitem{AS03}
J.-P. Allouche and J. Shallit, \textit{Automatic Sequences: Theory,
  Applications, Generalizations}, Cambridge, 2003.

\bibitem{BS}
J. Berstel and P. S{\'e\'e}bold, ``Sturmian words'', in M. Lothaire, ed.,
{\it Algebraic Combinatorics on Words}, Cambridge University Press,
2002, pp.~40--97.

\bibitem{BK17}
J. Byszewski and J. Konieczny, ``Automatic sequences and generalized
polynomials''.  Preprint available at
\url{https://arxiv.org/abs/1705.08979}~.

\bibitem{BK18}
J. Byszewski and J. Konieczny, ``Factors of generalised polynomials and
automatic sequences'', Indag. Math. (N.S.) 29 (2018), no. 3, 981--985.


\bibitem{C}
A. Cobham, ``Uniform tag sequences'',
Math. Systems Theory 6 (1972), 164--192.

\bibitem{Lot02}
M. Lothaire, \textit{Algebraic Combinatorics on Words}, Cambridge,
2002.

\bibitem{Kre18}
T. Krebs, ``A more reasonable proof of Cobham's Theorem''. Preprint available at \url{https://arxiv.org/abs/1801.06704}~.


\bibitem{Sha96}
J. Shallit, ``Automaticity IV: sequences, sets, and diversity'',
J. Th\'eorie des Nombres de Bordeaux, 8 (1996), 347--367.

\bibitem{Sha09}
J. Shallit, \textit{A Second Course in Formal Languages and Automata
  Theory}, Cambridge, 2009.

\bibitem{Tap95}
T. Tapsoba, ``Minimum complexity of automatic non Sturmian
sequences'', RAIRO Theor. Inf. Appl. 29 (1995), 285--291.
\end{thebibliography}
\end{document}